\documentclass[english,12pt]{article}

\usepackage[LGR,T1]{fontenc}
\usepackage[utf8]{inputenc}
\usepackage{pdflscape}
\usepackage{geometry}
\usepackage{array}
\usepackage{alltt}
\usepackage{float}
\usepackage{booktabs}
\usepackage{textcomp}
\usepackage{multirow}
\usepackage{amsmath}
\usepackage{amsthm}
\usepackage{amssymb}
\usepackage{authblk}
\usepackage{sectsty}
\usepackage{hyperref}
\usepackage{graphicx}
\usepackage{blindtext}
\usepackage{caption}
\usepackage{subcaption}
\usepackage{lmodern}
\usepackage{xcolor}
\usepackage{multicol}
\usepackage{tabularray}
\usepackage{autobreak}
\usepackage{algorithm}
\usepackage{algorithmic}
\usepackage{framed}
\usepackage{mathrsfs}
\usepackage{colortbl}
\usepackage{siunitx}  
\usepackage{threeparttable}

\definecolor{steelblue}{RGB}{70, 130, 180}

\usepackage{xcolor}

\hypersetup{
    colorlinks=true,
    linkcolor=blue,
    citecolor=blue,
    urlcolor=blue
}

\newcommand{\Vbb}{\mathbb{V}}

\definecolor{drslide}{HTML}{A12210}
\definecolor{gslide}{HTML}{00B050}

\sectionfont{\fontsize{14}{13}\selectfont}
\subsectionfont{\fontsize{12}{12}\selectfont}

\newtheorem{theorem}{Theorem}

\allowdisplaybreaks

\makeatletter

\ProvideTextCommand{\~}{LGR}[1]{\char126#1}

\theoremstyle{plain}

\usepackage{amsmath}

\makeatother

\usepackage{babel}
\providecommand{\theoremname}{Theorem}

\usepackage[backend=biber, style=authoryear, uniquename=init, giveninits, maxcitenames=2]{biblatex}

\DeclareDelimFormat{nameyeardelim}{\addcomma\space}

\DeclareCiteCommand{\parencite}[\mkbibparens]
  {\usebibmacro{prenote}}
  {\usebibmacro{citeindex}%
   \printtext[bibhyperref]{%
     \usebibmacro{cite}}}
  {\multicitedelim}
  {\usebibmacro{postnote}}

\DeclareCiteCommand{\cite}
  {\usebibmacro{prenote}}
  {\usebibmacro{citeindex}%
   \printtext[bibhyperref]{%
     \usebibmacro{cite}}}
  {\multicitedelim}
  {\usebibmacro{postnote}}

\let\cite\parencite

\usepackage{thmtools}
\declaretheoremstyle[
headfont=\normalfont\bfseries,
  notefont=\mdseries, notebraces={(}{)},
  bodyfont=\normalfont,
  spaceabove = 0.4cm, 
  spacebelow = 0.4cm,
  headpunct=\newline,    
  postheadspace=0pt 
]{mystyle}
\declaretheorem[style=mystyle]{example}

\usepackage{etoolbox}
\makeatletter
\patchcmd{\hyper@makecurrent}{%
    \ifx\Hy@param\Hy@chapterstring
        \let\Hy@param\Hy@chapapp
    \fi
}{%
    \iftoggle{inappendix}{
        \@checkappendixparam{chapter}%
        \@checkappendixparam{section}%
        \@checkappendixparam{subsection}%
        \@checkappendixparam{subsubsection}%
        \@checkappendixparam{paragraph}%
        \@checkappendixparam{subparagraph}%
    }{}%
}{}{\errmessage{failed to patch}}

\newcommand*{\@checkappendixparam}[1]{%
    \def\@checkappendixparamtmp{#1}%
    \ifx\Hy@param\@checkappendixparamtmp
        \let\Hy@param\Hy@appendixstring
    \fi
}
\makeatletter

\newtoggle{inappendix}
\togglefalse{inappendix}

\apptocmd{\appendix}{\toggletrue{inappendix}}{}{\errmessage{failed to patch}}

\begin{document}

\title{Improving Variance Estimation for Covariate Adjustment with Binary Outcomes}

\author[1,3$\dagger$]{Kaitlyn Lee}
\author[2]{Alex Ocampo}
\author[3]{Courtney Schiffman}
\author[3]{Michael Friesenhahn}
\author[3]{Christina Rabe}
\author[4]{Michael Rosenblum}

\affil[1]{University of California, Berkeley, Berkeley, CA, USA}
\affil[2]{Hoffmann La Roche Ltd, Basel, Switzerland}
\affil[3]{Genentech, South San Francisco, CA, USA}
\affil[4]{Johns Hopkins University, Baltimore, MD, USA\vspace{1em}}

\affil[$\dag$]{Email: kaitlyn\_lee@berkeley.edu}
\setcounter{Maxaffil}{0}
\renewcommand\Affilfont{\itshape\small}
\maketitle
\date{}
\vspace{-11mm}
\begin{abstract}
Covariate adjustment is a general method for improving precision when estimating treatment effects in randomized trials and is recommended by the FDA in its 2023 guidance when baseline variables are prognostic for the primary outcome. We focus on a method highlighted in that guidance called ``standardization" (or ``g-computation") for estimating the marginal treatment effect. We address the question of how to reliably estimate variance for binary outcomes when marginal outcome probabilities are close to 0 or 1. We propose an influence function-based leave-one-out cross-validated (IF-LOO) variance estimator for the standardized difference-in-means average treatment effect. Through simulation studies, we show that this estimator provides appropriate type-I error control and performs reliably in challenging settings where existing methods can yield inflated type-I error or fail entirely, such as when outcome events are rare or sample sizes are small. In addition to having desirable statistical properties, we derive a closed-form expression for the proposed estimator, enabling straightforward and reliable implementation by study statisticians. The robust finite-sample performance and ease of implementation suggest the IF-LOO variance estimator is a prudent default choice for standardization in clinical trials. 
\; \\ \\
\textit{Keywords}: Covariate Adjustment, Standardization, G-computation, Influence Function, Leave-One-Out Estimator
\end{abstract}

\section{Introduction}

Using baseline covariate information to increase efficiency and gain power in statistical analyses of randomized trials is a long-standing idea dating back to \textcite{fisher1934statistical}. One method for covariate adjustment developed in the causal inference literature is called ``standardization," and should be in the toolkit of every clinical trial statistician faced with analyzing a binary endpoint. This is not only because the method was highlighted by the FDA in their covariate adjustment guidance \cite{FDA2023covariates}, but also because it allows statisticians to perform a covariate adjusted analysis that targets the treatment effect as a marginal risk difference. Without such methods, targeting a marginal treatment effect would only be possible in an unadjusted analysis, leaving behind any potential efficiency gains. 

Standardization \cite{neison1844method,robins1986new} is also a more robust alternative to the traditional approach of adjusting for covariates in a logistic regression and performing inference on the parameter corresponding to the treatment covariate. Standardization begins by fitting the same traditional logistic model, but only uses this regression fit as a working model. From this model, one then predicts outcomes for each subject under either active or control treatment and compares the averages of these predictions to quantify the treatment effect. By averaging across all subjects, this "standardizes" the covariate distributions when comparing the estimated means of treated and untreated subjects. Thus, this adjustment procedure can be leveraged for efficiency gains - i.e., tighter confidence intervals and more informative p-values - when the modeled covariates are prognostic. Since the probability of the outcome is marginalized in each treatment arm separately before contrasting them, this allows the analyst to chose a risk difference or ratio as opposed to the less interpretable odds ratio. Another advantage is that the treatment effect estimate is asymptotically unbiased regardless of the postulated canonical link working model \cite{rosenblum2010simple}, so the treatment effect estimated by standardization is robust to model mispecification - i.e., it is semiparametric.

While the standardization procedure is well-established and provides a straightforward way to obtain a point estimate for the treatment effect, questions remain regarding the best way to estimate the variance of the estimator (and thus construct confidence intervals). The original draft FDA guidance released in 2021 referenced a variance estimation approach that proposed using the covariance estimate from the working logistic model and applies the delta method for the risk difference transformation \cite{ge2011covariate}. There exist several other methods for variance estimation, including influence-function plug-in estimators \cite{rosenblum2010simple}, methods based on the limiting distribution of the standardized estimator \cite{ye2023robust}, the nonparametric bootstrap \cite{efron_bootstrap_1979}, and M-estimation \cite{wang2023model}. 

Since the draft guidance on covariate adjustment was published, statisticians have begun comparing the performance of some of the above estimators. Most notably, a debate was sparked in part by the theoretical work of \textcite{ye2023robust}, which demonstrated that the variance estimation approach previously proposed by \textcite{ge2011covariate} neglects a component of the variance, leading to underestimation. Ye et al. present a simulation study comparing the new method proposed in the paper with the method proposed in Ge et al and found that Ge et al's method had 91\% coverage while Ye et al.'s had correct 95\% coverage; however, it is unclear how applicable this result is to real-world usage, as the data-generating process for this simulation used a treatment effect corresponding to a very high odds ratio that may be unrealistic in clinical trial settings. In addition, this particular paper did not consider other variance estimators proposed in the literature. A more realistic simulation study was conducted by Liu \& Xi which compared nine different variance estimators with much more comparable results among approaches, but still corroborating Ye's claims that those based the delta method have a tendency to provide anti-conservative inference \cite{liu2024covariate}. Ultimately, work such as these two papers resulted in the method in Ge et al. being removed from the final guidance \cite{FDA2023covariates}. 

However, notably absent from the literature are studies that have explicitly considered how the various variance estimators perform in scenarios with low event rates or small sample sizes, which are situations encountered in many randomized clinical trials. While exact procedures such as Fisher's \parencite*{fisher1934statistical} or the Suissa-Shuster \parencite*{suissa1985exact} test are well-formulated to handle the small sample setting for binary outcomes, they cannot reap the benefits of covariate adjustment. In our own explorations of the variance estimators for covariate adjustment when designing a phase III clinical trial, we uncovered that when event rates are low or the sample size is small, the proposed variance estimators for standardization either fail to obtain adequate coverage or suffer from computational drawbacks, which inspired us to develop a new approach. Therefore, in this manuscript, we propose a novel influence function-based leave-one-out (IF-LOO) estimate of the variance for the standardized treatment effect, which is an extension of a previously proposed influence function plug-in based variance estimator \cite{rosenblum2010simple}. This estimator draws on the history of using cross-fit and cross-validated estimators in semiparametric statistics (e.g., \cite{vanderLaanPolleyHubbard, Zheng2011, newey2018cross, chernozhukov_doubledebiased_2018, Balzer_vanderLaan_Petersen_2026}). It is also inspired by the leave-one-out strategy employed in \cite{colantuoni2015leveraging} which was originally used to assess whether adjusting for covariates may lead to substantial precision gains. 

The outline of this paper is as follows: \autoref{methods} reviews the methodology of standardization and describes our proposed variance approach while providing a proof of $\sqrt{n}$ consistency and asymptotic normality. \autoref{results:sims} presents a simulation study in order to demonstrate adequate properties when binary event rates are low and the inappropriate coverage of other proposed influence-function based estimators of the variance. \autoref{discussion} concludes the paper with a discussion.

\section{Methods}
\label{methods}
We consider the randomized control trial setting as described in \cite{rosenblum2010simple}, where the outcome of interest is binary. Suppose our trial is comprised of $n$ participants. Let $\mathbf{X}$ denote a vector of baseline variables, $A$ denote the treatment assignment (where $A=1$ if the participant is assigned to the treatment arm and $A=0$ is the participant is assigned to the control arm), and $Y \in \{0, 1\}$ denote the outcome of interest.

We adopt the potential outcomes framework \cite{neyman1923application,rubin1974estimating}. Each individual has potential outcomes $Y_i(1)$ and $Y_i(0)$, corresponding to the outcome under treatment and control, respectively. We assume consistency, meaning that for an individual $i$, their observed outcome $Y_i = Y_i(A_i)$, the potential outcome corresponding to their treatment assignment. For each individual $i$, we observe the tuple $O_i = (\mathbf{X}_i, A_i, Y_i).$ We assume that the tuples $O_i$, $i = 1, \dots, n$ are independent, identically distributed draws from an unknown data generating process $\mathbb{P}_0$. We assume that all participants are randomly assigned to the treatment or control arm with probability $\mathbb{P}_0(A=1) = \pi_0$.

Our estimand of interest is the average treatment effect $\theta_{ATE}$, which, under the above assumptions, is equivalent to the difference in means in the two arms: \begin{align*}
    \theta_{ATE} &= \mathbb{E}_{\mathbb{P}_0}[Y(1)-Y(0)]\\
    &= \mathbb{E}_{\mathbb{P}_0}[Y(1)|A=1]-\mathbb{E}_{\mathbb{P}_0}[Y(0)|A=0]\\
    &= \mathbb{E}_{\mathbb{P}_0}[Y|A=1]-\mathbb{E}_{\mathbb{P}_0}[Y|A=0].
\end{align*}

\noindent Specifically, this equality holds because the expectation can be spread across the two counterfactuals due to the linearity of expectation, $A$ can be conditioned on because it is independent of $Y(a)$, $ \forall a\in \{0,1\}$ due to randomization, and the last equality holds due to consistency.

\subsection{Standardized Estimator}
For estimation, we employ the standardized estimator with a logistic regression working model described in \cite{rosenblum2010simple}. Let $m_{\boldsymbol{\beta}}(A, X)$ be a linear model including an intercept, coefficient on $A$, and possibly other terms involving $X$ and $A$ (where $\boldsymbol{\beta}$ is the vector of coefficients). We posit the following working model:
\begin{align}
\label{eq:work_mod}
    \mathbb{P}(Y=1|A,\mathbf{X}) &= \frac{1}{1+\text{exp}(-m_{\boldsymbol{\beta}}(A,\mathbf{X}))}\notag \\
&= \text{expit}(m_{\boldsymbol{\beta}}(A,\mathbf{X})).
\end{align}

\begin{leftbar}
    \begin{example}[Main terms model]
    \label{ex:main_terms}
    For concreteness, we will use a model involving $A$ and $X$ as main terms as a running example. More specifically, let $m_{\boldsymbol{\beta}}(A,\mathbf{X}) =  \beta_0 + \beta_A A + \boldsymbol{\beta}_{\mathbf{X}}^T \mathbf{X}$. Then, our working model would be given by
    $$\mathbb{P}(Y=1|A,\mathbf{X}) = \text{expit}(\beta_0 + \beta_A A + \boldsymbol{\beta}_{\mathbf{X}}^T \mathbf{X}).$$
\end{example}
\end{leftbar}

To estimate the ATE, researchers perform the following steps: 
\begin{enumerate}
\item Fit a logistic regression model $\widehat{\mu}(a,x) = \text{expit}(m_{\widehat{\boldsymbol{\beta}}}(a,x))$ on the data $O_1, \dots, O_n$, estimating the coefficients $\widehat{\boldsymbol{\beta}}$ using maximum likelihood estimation.
\item For each individual, predict potential outcomes $\widehat{Y_i}(1) = \widehat{\mu}(1,X_i)$, $\widehat{Y_i}(0) = \widehat{\mu}(0,X_i)$.
\item The standardized estimator is given by 
\begin{align}
\widehat{\theta}_{\text{standardized}} &= \frac{1}{n} \sum_{i=1}^n \widehat{Y_i}(1) - \widehat{Y_i}(0)\notag \\
& =\frac{1}{n} \sum_{i=1}^n \widehat{\mu}(1,X_i)- \widehat{\mu}(0,X_i).
\end{align}
\end{enumerate}

\begin{leftbar}
\renewcommand{\thmcontinues}[1]{Main terms model, cont.}
\begin{example}[continues=ex:main_terms]
Let $m_{\boldsymbol{\hat{\beta}}}(A, \mathbf{X}) = \widehat{\beta}_0 + \widehat{\beta}_A A + \widehat{\beta}_{\mathbf{X}}^T \mathbf{X}$ be the linear part of the logistic regression, where the coefficients are obtained by maximum likelihood estimation in a logistic regression of the binary $Y_i$ on $A_i, X_i$, and an intercept using observations $i = 1, \dots, n$. We write $\hat{\mu}(a, x) =$ expit$(m_{\boldsymbol{\hat{\beta}}}(a,x))$. Then, our estimator is given by
\begin{align*}
    \widehat{\theta}_{\text{standardized}}
    & =\frac{1}{n} \sum_{i=1}^n \widehat{\mu}(1,X_i)- \widehat{\mu}(0,X_i)\\
& =\frac{1}{n} \sum_{i=1}^n \text{expit}(m_{\widehat{\boldsymbol{\beta}}}(1,X_i))- \text{expit}(m_{\widehat{\boldsymbol{\beta}}}(0,X))\notag \\
&=\frac{1}{n} \sum_{i=1}^n \text{expit}\left(\widehat{\beta}_0 + \widehat{\beta}_A + \widehat{\beta}_{\mathbf{X}}^T \mathbf{X}_i\right) - \text{expit}\left(\widehat{\beta}_0 + \widehat{\beta}_{\mathbf{X}}^T \mathbf{X}_i\right).
\end{align*}
\end{example}
\end{leftbar}

As explained in \cite{rosenblum2010simple}, the working model in \autoref{eq:work_mod} \textit{need not be correctly specified} in the setting of a randomized controlled trial. Even when the working model is completely misspecified, the standardized estimator is still guaranteed to be consistent for $\theta_{ATE}$, meaning that the estimator converges to the true value as the sample size $n \to \infty$. 

\subsection{Variance Estimator for the Standardized Estimator}
\label{var}
One estimator for the variance of $\widehat{\theta}_{\text{standardized}}$ is presented in \cite{rosenblum2010simple}:
\begin{align}
\label{eq:var}
\widehat{\sigma}^2 = \widehat{\mathbb{V}}_{\text{standardized}}/n
\end{align}
\noindent where
\begin{align}
\widehat{\mathbb{V}}_{\text{standardized}} = \frac{1}{n} \sum_{i=1}^n \left(\left( \frac{A_i}{\pi_0} - \frac{1-A_i}{1-\pi_0} \right) \left(Y_i - \widehat{\mu}(A_i,  X_i) \right) + \widehat{\mu}(1, X_i) - \widehat{\mu}(0, X_i) - \widehat{\theta}_{\text{standardized}} \right)^2
\end{align}
\noindent is an estimate of the asymptotic variance of the estimator.

This variance estimator is based on estimating the variance of the \textit{influence function} (IF) of the estimator and is consistent for the true variance of the estimator. An influence function measures the influence that a particular observation has on an estimator \cite{serfling1980,boos2013,}. Any regular and asymptotically linear estimator $\hat{\theta}$ admits a unique influence function expansion \cite{tsiatis2006} if there exists a function $\varphi$ with expectation zero such that:
\begin{align}
\label{if_expansion}
    \sqrt{n}\big(\hat{\theta} - \theta\big) = \frac{1}{\sqrt{n}}\sum_{i=1}^{n}\varphi(O_i) + o_{p}(1). 
\end{align}

The influence function for the standardized estimator was given in \cite{rosenblum2010simple}. As in \cite{rosenblum2010simple}, we assume there exists a maximizer $\boldsymbol{\beta}^*$ of the expected log-likelihood of the logistic regression, where the expectation is with respect to $\mathbb{P}_0$. Under this assumption, Rosenblum and van der Laan show that $\widehat{\boldsymbol{\beta}}$ converges to $\boldsymbol{\beta}^*$. We write $\mu^*(a,x) = \text{expit}(m_{\boldsymbol{\beta}^*}(a,x))$ to represent the logistic regression model with the true coefficients $\boldsymbol{\beta}^*$. They also show that the influence function of the standardized estimator is given by
\begin{align}
\label{method:standardized_if}
    \varphi(O) = \left(\frac{A}{\pi_0} - \frac{1-A}{1-\pi_0} \right)(Y-\mu^*(A,X)) +\mu^*(1,X) - \mu^*(0,X) - \theta_{ATE}.
\end{align}

Due to the connection between the expansion in \autoref{if_expansion} and the central limit theorem, the influence function $\varphi$ can then be used as a means of estimating the variance of an estimator:
\begin{align*}
     \widehat{\sigma}^2(\hat{\theta}) = \frac{1}{n^2}\sum_{i=1}^{n}\varphi^{2}(O_i). 
\end{align*}

In the space of semiparametric models, the \textit{efficient influence function} (EIF) is the influence function with the smallest asymptotic variance that reaches the semiparametric efficiency bound. That is,
\begin{align*}
     \Vbb(\varphi_{eff})\leq \Vbb(\varphi) \quad \forall \; \varphi  \in \mathcal{IF} 
\end{align*}
where $\mathcal{IF}$ is the set of all valid influence functions within the Hilbert-space. The influence function of $\widehat{\theta}_{standardized}$ in equation \ref{eq:var} is the efficient influence function $\varphi_{eff}$ and obtains this bound if the working model is correctly specified.

Despite being grounded in rigorous asymptotic statistical theory, as we show in \autoref{results:sims}, this IF-based variance estimator (along with others proposed in the current literature) can underestimate variance in certain finite-sample scenarios which researchers may plausibly face when running a randomized control trial, leading to invalid type-I error control. The main situations we focus on are 1) the sample size is small or 2) the binary outcome is rare in one or both arms.

\subsection{Main Result: Leave One Out EIF Variance Estimation}

\label{var_loo_cv}

To overcome this finite-sample error, we propose a new variance estimator that builds upon the influence function based estimator above. This variance estimator employs a leave one out cross-validation (IF-LOO) approach to estimate the variance of the influence function.

Intuitively, one can think of the underestimation resulting from using the estimator $\widehat{\sigma}^2$ as an overfitting problem. In particular, the term $\left( \frac{A_i}{p_0} - \frac{1-A_i}{1-p_0} \right) \left(Y_i - \widehat{\mu}(A_i,  X_i) \right)$ appears in \autoref{eq:var}. When the sample size is small, or there is little variation in the observed outcomes, it is easy to see that our model may overfit to our particular data, meaning our estimates of the residuals $\left(Y_i - \widehat{\mu}(A_i,  X_i) \right)$ are too small relative to the true residual $Y_i - \mathbb{E}_{\mathbb{P}_0}[Y|A_i, \mathbf{X}_i]$. By employing leave-one-out, we prevent such overfitting by ensuring the prediction for individual $i$ is generated using a model that was trained on all data except for that from individual $i$.

Let $\widehat{\mu}_{-i}(a,x)$ be the predictions from a fitted working logistic regression model using all of the available data \textit{excluding} individual $i$ ($n-1$ data points). Then, we define

\begin{equation}
\label{eq:var_loo_cv}
 \widehat{\sigma}^2_{\text{IF-LOO}}
=\frac{1}{n^2}
\sum_{i=1}^n \widehat{\varphi}_{-i}^2,   
\end{equation} 

\noindent where
$$
\widehat{\varphi}_{-i} = \left( \frac{A_i}{\pi_0} - \frac{1-A_i}{1-\pi_0} \right) \big( Y_i - \widehat{\mu}_{-i}(A_i, X_i) \big) +
\widehat{\mu}_{-i}(1,X_i) -
\widehat{\mu}_{-i}(0,X_i) -\widehat{\theta}_{\text{standardized}}.
$$

\noindent In addition, we present the following theorem, stating that the proposed estimator is $\sqrt{n}$ consistent and asymptotically normal:

\begin{theorem}
\label{theorem}
Under Conditions 1-4 in \autoref{app:conditions},
$$\sqrt{n} \left(\widehat{\mathbb{V}}_{\text{IF-LOO}}-  \mathbb{V}_{0}\right) \overset{\mathcal{D}}{\longrightarrow} \mathcal{N}(0, W),$$
where $W = \text{Var}(\varphi(O)^2)$.
\end{theorem}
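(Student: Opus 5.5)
We read $\widehat{\mathbb{V}}_{\text{IF-LOO}} = n\,\widehat{\sigma}^2_{\text{IF-LOO}} = \frac{1}{n}\sum_{i=1}^n \widehat{\varphi}_{-i}^2$ and $\mathbb{V}_0 = \mathbb{E}_{\mathbb{P}_0}[\varphi(O)^2]$. The proof rests on the decomposition
\begin{align*}
\sqrt{n}\left(\widehat{\mathbb{V}}_{\text{IF-LOO}}-\mathbb{V}_0\right)
= \frac{1}{\sqrt{n}}\sum_{i=1}^n \left(\varphi(O_i)^2-\mathbb{V}_0\right)
+ \frac{1}{\sqrt{n}}\sum_{i=1}^n \left(\widehat{\varphi}_{-i}^2-\varphi(O_i)^2\right).
\end{align*}
The first term is a normalized sum of i.i.d. mean-zero variables. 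Since $Y\in\{0,1\}$ and $\mu^*\in(0,1)$, $\varphi$ is bounded whenever $\pi_0\in(0,1)$, so $\varphi^2$ has finite variance $W$. The classical CLT then gives convergence in distribution to $\mathcal{N}(0,W)$. By Slutsky's lemma, it remains to show that the second term is $o_p(1)$.

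For the remainder, write $\widehat{\varphi}_{-i}=\varphi(O_i)+\Delta_i$. Then $\widehat{\varphi}_{-i}^2-\varphi(O_i)^2=2\varphi(O_i)\Delta_i+\Delta_i^2$, where
\begin{align*}
\Delta_i = -\left(\tfrac{A_i}{\pi_0}-\tfrac{1-A_i}{1-\pi_0}\right)\left(\widehat{\mu}_{-i}-\mu^*\right)(A_i,X_i) + \left(\widehat{\mu}_{-i}-\mu^*\right)(1,X_i) - \left(\widehat{\mu}_{-i}-\mu^*\right)(0,X_i) - \left(\widehat{\theta}_{\text{standardized}}-\theta_{ATE}\right).
\end{align*}
The regularity conditions in the appendix (presumably: existence and uniqueness of $\boldsymbol{\beta}^*$, a positive definite information matrix, bounded covariates, and $\pi_0\in(0,1)$) give two standard facts. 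First, M-estimation theory gives $\widehat{\boldsymbol{\beta}}-\boldsymbol{\beta}^*=O_p(n^{-1/2})$. Second, a one-step expansion of the score gives $\max_i\|\widehat{\boldsymbol{\beta}}_{-i}-\widehat{\boldsymbol{\beta}}\|=O_p(n^{-1})$ uniformly in $i$, because removing one observation perturbs the estimating equation by a bounded term of order $1/n$. Since expit is Lipschitz and the covariates are bounded, these imply $\max_i\sup_{a}|\widehat{\mu}_{-i}(a,X_i)-\mu^*(a,X_i)|=O_p(n^{-1/2})$. Also $\widehat{\theta}-\theta_{ATE}=O_p(n^{-1/2})$ by the influence function expansion. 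Together these give $\max_i|\Delta_i|=O_p(n^{-1/2})$, so $n^{-1/2}\sum_i\Delta_i^2=O_p(n^{-1/2})=o_p(1)$.

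The main obstacle is the cross term $n^{-1/2}\sum_i 2\varphi(O_i)\Delta_i$. The crude bound only gives $O_p(1)$, so we need a first-order cancellation. The plan is to replace $\widehat{\boldsymbol{\beta}}_{-i}$ by $\widehat{\boldsymbol{\beta}}$, at a cost of $O_p(n^{-1})$ per term and hence $O_p(n^{-1/2})$ in total. We then Taylor expand $\widehat{\mu}$ around $\boldsymbol{\beta}^*$. The cross term then becomes
\begin{align*}
\left(\frac{1}{n}\sum_i 2\varphi(O_i)\,g(O_i)\right)^{T}\sqrt{n}\left(\widehat{\boldsymbol{\beta}}-\boldsymbol{\beta}^*\right)
- 2\sqrt{n}\left(\widehat{\theta}-\theta_{ATE}\right)\frac{1}{n}\sum_i\varphi(O_i) + o_p(1),
\end{align*}
where $g(O)$ is the gradient in $\boldsymbol{\beta}$ of the $\mu$-part of $\Delta$. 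The second piece is $O_p(1)\cdot O_p(n^{-1/2})$, since $\mathbb{E}\varphi=0$. The first piece is $\mathbb{E}[\varphi g]^T\,O_p(1)+o_p(1)$, so the argument requires $\mathbb{E}[\varphi(O)g(O)]=0$. This is the step I would verify first.

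To check it, I would condition on $X$ and use randomization ($A\perp X$) together with the logistic score equations, which make $\mathbb{E}[(Y-\mu^*(A,X))\,\partial_{\boldsymbol{\beta}} m]=0$. If exact orthogonality fails under misspecification, I would fall back to assuming it as one of Conditions 1--4. Alternatively, one can note that the limit then acquires an additional variance component, in which case $W$ as stated requires the working model to be correct or the conditions to impose it. Given the orthogonality, the remainder is $o_p(1)$ and Slutsky's lemma yields the theorem.
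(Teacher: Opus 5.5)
Your decomposition is the paper's (its terms A, B, C), and your treatment of the $\Delta_i^2$ term and of the $\widehat\theta_{\text{standardized}}-\theta_{ATE}$ piece is fine. You are also right that $\varphi$ is bounded, so Condition 4 is automatic.

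For the cross term you take a genuinely different route. The paper asserts $\mathbb{E}_0[\varphi(O_i)\delta_i\mid\{O_j\}_{j\neq i}]=0$ ``because of leave one out''. It then applies a conditional Chebyshev bound as if the summands were conditionally independent. They are not: $\delta_i$ depends on every $O_j$ with $j\neq i$, and it contains $\widehat\theta_{\text{standardized}}-\theta_{ATE}$, which depends on $O_i$. Your linearization $\widehat{\boldsymbol{\beta}}_{-i}\to\widehat{\boldsymbol{\beta}}\to\boldsymbol{\beta}^*$ pulls the common factor $\sqrt n(\widehat{\boldsymbol{\beta}}-\boldsymbol{\beta}^*)$ out of the sum and so sidesteps that dependence. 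It also shows that leave-one-out plays no role at first order: the same expansion applies verbatim to the plug-in $\widehat{\mathbb V}_{\text{standardized}}$.

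The price is regularity beyond Conditions 1--4: uniqueness of $\boldsymbol{\beta}^*$, a nonsingular information matrix, bounded covariates, and a uniform, pointwise-at-$X_i$ version of Condition 3. These hold for fixed-dimension logistic regression with bounded covariates, as in the paper's simulations. The paper tacitly needs the first two anyway when it invokes $\|\widehat\mu-\mu^*\|_{L_2}=O_p(n^{-1/2})$.

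The step you leave open is the crux, and the score equations are the wrong tool for it. Put $e_a(X)=\mathbb{E}_0[Y\mid A=a,X]-\mu^*(a,X)$ and $\dot\mu^*(a,X)=\mu^*(a,X)\{1-\mu^*(a,X)\}\,\partial_{\boldsymbol{\beta}}m_{\boldsymbol{\beta}^*}(a,X)$. Randomization gives $\mathbb{E}[g\mid X]=0$, which kills the $X$-only part of $\varphi$. Conditioning the residual part on $(A,X)$ then gives
\[
\mathbb{E}[\varphi g]=-\frac{1}{\pi_0(1-\pi_0)}\,\mathbb{E}\Big[\{(1-\pi_0)e_1(X)+\pi_0e_0(X)\}\{(1-\pi_0)\dot\mu^*(1,X)+\pi_0\dot\mu^*(0,X)\}\Big].
\]
The score equations only give $\mathbb{E}[\pi_0e_1(X)\,\partial_{\boldsymbol{\beta}}m(1,X)+(1-\pi_0)e_0(X)\,\partial_{\boldsymbol{\beta}}m(0,X)]=0$. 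That identity has different arm weights and no factor $\mu^*(1-\mu^*)$, so it cannot force the display to vanish.

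Under correct specification ($e_0=e_1=0$), orthogonality follows in two lines. Otherwise it is generically false. In that case the cross term converges to $2\,\mathbb{E}[\varphi g]^T\sqrt n(\widehat{\boldsymbol{\beta}}-\boldsymbol{\beta}^*)$, jointly normal with term A. The limit variance is then $\text{Var}\{\varphi(O)^2+2\,\mathbb{E}[\varphi g]^T\psi_{\boldsymbol{\beta}}(O)\}$, where $\psi_{\boldsymbol{\beta}}$ is the influence function of $\widehat{\boldsymbol{\beta}}$, not $W$.

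Since Conditions 1--4 explicitly allow a misspecified working model, the stated $W$ does not follow from them. Your fallback is the right repair: add correct specification, or $\mathbb{E}[\varphi g]=0$, as a hypothesis. This is not a weakness relative to the paper. Once $\widehat\mu_{-i}$ is frozen, the paper's term-B identity is exactly $\mathbb{E}[\varphi h]=0$ for an $h$ of the same form as your $g$, and leave-one-out independence alone does not deliver it.
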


\noindent We present the proof of this theorem in \autoref{app:proof}. We evaluate the finite sample properties in \autoref{results:sims} via simulation studies and show that the estimator results in much better coverage than other influence-function based methods for variance estimation.

\begin{leftbar}
    \renewcommand{\thmcontinues}[1]{Main terms model, cont.}
\begin{example}[continues=ex:main_terms]
For an individual $i$, let $m_{\widehat{\boldsymbol{\beta}}_{-i}}(A, \mathbf{X}) = \widehat{\beta}_{0,-i} + \widehat{\beta}_{A, -i} A + \widehat{\boldsymbol{\beta}}_{\mathbf{X}, -i}^T \mathbf{X}$ be the estimated linear part of the logistic regression model using all observations excluding $(\mathbf{X}_i, A_i, Y_i)$, where the coefficients are obtained by maximum likelihood estimation in a logistic regression of $Y$ on $A, X$, and an intercept. For each individual, calculate
\begin{align*}
\begin{autobreak}
        \widehat{\varphi}_{-i} =
        \left( \frac{A_i}{p_0} - \frac{1-A_i}{1-p_0} \right) \big( Y_i  - \text{expit}(m_{\widehat{\boldsymbol{\beta}}_{-i}}(A_i, \mathbf{X}_i)) \big) 
        + \text{expit}(m_{\widehat{\boldsymbol{\beta}}_{-i}}(1, \mathbf{X}_i)) - \text{expit}(m_{\widehat{\boldsymbol{\beta}}_{-i}}(0, \mathbf{X}_i)) -\widehat{\theta}_{\text{standardized}}n
\end{autobreak}

\end{align*}

\noindent Then, our estimator is given by
\begin{align*}
\widehat{\sigma}^2_{\text{IF-LOO}}
=\frac{1}{n^2}
\sum_{i=1}^n \widehat{\varphi}_{-i}^2.
\end{align*}
\end{example}
\end{leftbar}

\section{Simulation Studies}
\label{results:sims}

\subsection{Simulation Setup}

To evaluate the finite-sample performance of our estimator, we perform simulation studies mimicking realistic scenarios one may expect in a randomized clinical trial. Using simulated data allows us to evaluate the performance of the different estimators against a known ground truth. We consider two hypothetical randomized control trials with two data generating processes (DGPs) to highlight the types of scenarios in which researchers would gain the most from the IF-LOO variance estimator. The first is a scenario with a moderate sample size ($N=250$) and a small marginal placebo rate of the outcome in the population ($2.5\%$). The second is a scenario with a smaller sample size ($N=50$) and a larger marginal placebo rate of the outcome in the population ($25\%$).

In each of the two scenarios, each individual observation consists of six independent covariates: four continuous and two binary. The treatment is assigned via a random coin flip. The outcome is a binary variable where the log odds is linear in all six covariates and the treatment. We consider different average treatment effects of $0\%, 2.5\%, 5\%,10\%,$ and $15\%$. Since the ground truth for the marginal risk difference is not a parameter from the data-generating process, we utilize Gaussian quadrature to compute the true causal effect \cite{ocampo2026revealing, gauss1814methodus}. Each DGP is sampled 10,000 times, and each method is applied to each replicate. Explicit details on the simulation can be found in \autoref{app:dgps} and R code is included in the supplementary materials to foster reproducibility.


We consider two of the ATE estimators in each simulation: an unadjusted analysis and the standardized estimator. The unadjusted analysis is implemented by the difference-in-means estimator, with the variance calculated via a plug-in for the variance of a Bernoulli random variable. The standardized estimator is implemented by fitting a logistic regression working model on the data. We use a correctly specified model when implementing the standardized estimator. The reason for this is two fold. As described above in \autoref{var}, if the working model is correctly specified, then the resulting standardized estimator has the lowest possible variance (in a semi-parametric sense) -- thus, using a correctly specified model is optimal in the sense of achieving the most precision gains possible out of covariate information. At the same time, scenarios in which researchers use a correctly specified working model is precisely when we expect the most overfitting. Thus, using a correctly specified working model is when we would expect traditional estimators of the variance to perform the worst.

As benchmark methods, we consider three estimators of the variance for the standardized estimator popular in the literature: the influence-function based estimator described in \cite{rosenblum2010simple} (which we refer to as the ``IF plug-in" estimator), the analysis of heterogeneous covariance (ANHECOVA) estimator as implemented in the RobinCar2 package \cite{robincar}, and the nonparametric bootstrap (over 1000 bootstrap samples) \cite{efron_bootstrap_1979}. To evaluate the performance of the estimators, we consider the bias of the variance estimator as well as the coverage of the normal-based $95\%$ confidence interval.

\subsection{Simulation Results}

The main results of the simulation studies are presented in \autoref{tab:sim_n250} and \autoref{tab:sim_n50}, which display the empirical coverage of the nominal 
$95\%$ confidence interval across the five DGPs for each scenario.

\begin{table}[ht!]
\centering
\begin{threeparttable}
\caption{Simulation results ($N = 250$, placebo rate = 2.5\%) }
\label{tab:sim_n250}
\small
    \begin{tabular}{ll
      S[table-format=1.3]
      S[table-format=1.3]
      S[table-format=1.3]
      S[table-format=1.3]
      S[table-format=1.3]}
    \toprule
    & & \multicolumn{5}{c}{Method} \\
    \cmidrule(lr){3-7}
    {ATE} & {Metric} & {IF-LOO} & {IF Plug-In} & {RobinCar2} & {Bootstrap} & {Unadjusted} \\
    \midrule
    
    \multirow{3}{*}{0\%}
      & SE       & 0.0207 & 0.0207 & 0.0207 & 0.0207 & 0.0201 \\
      & Est.\ SD & 0.0198 & 0.0171 & 0.0172 & 0.0230 & 0.0194 \\
      & Type I Error & {5.52} & {11.83} & {11.80} & {3.56} & {5.29} \\
    \midrule
    
    \multirow{3}{*}{2.5\%}
      & SE       & 0.0237 & 0.0237 & 0.0237 & 0.0237 & 0.0243 \\
      & Est.\ SD & 0.0229 & 0.0207 & 0.0208 & 0.0247 & 0.0236 \\
      & Coverage & {94.35} & {91.14} & {91.43} & {95.64} & {94.14} \\
    \midrule
    
    \multirow{3}{*}{5\%}
      & SE       & 0.0258 & 0.0258 & 0.0258 & 0.0258 & 0.0275 \\
      & Est.\ SD & 0.0256 & 0.0237 & 0.0238 & 0.0265 & 0.0271 \\
      & Coverage & {94.57} & {92.15} & {92.37} & {95.53} & {94.17} \\
    \midrule
    
    \multirow{3}{*}{10\%}
      & SE       & 0.0298 & 0.0298 & 0.0298 & 0.0298 & 0.0323 \\
      & Est.\ SD & 0.0300 & 0.0283 & 0.0284 & 0.0301 & 0.0324 \\
      & Coverage & {94.65} & {92.99} & {93.27} & {95.10} & {95.00} \\
    \midrule
    
    \multirow{3}{*}{15\%}
      & SE       & 0.0333 & 0.0333 & 0.0333 & 0.0333 & 0.0368 \\
      & Est.\ SD & 0.0333 & 0.0317 & 0.0319 & 0.0334 & 0.0365 \\
      & Coverage & {94.83} & {93.56} & {93.67} & {94.98} & {94.24} \\
    
    \bottomrule
    \end{tabular}
    \begin{tablenotes}
    \small
    \item \textit{Note:} SE is the empirical standard deviation of the ATE point estimates over replicates.
    Type I error is the proportion of replicates ($\times 100$) for which the null hypothesis of no treatment effect is rejected at $\alpha = 0.05$.
    Coverage is the proportion of replicates ($\times 100$) for which the nominal 95\% CI contained the true ATE.
    A small number of replicates (0.05\% for ATE = 2.5\% and 0.51\% for ATE = 0\%) were excluded for all estimators due to errors thrown by the RobinCar2 package for variance estimation; replicates were excluded jointly across estimators to avoid conferring an advantage to any estimator by discarding potentially difficult replicates.
    \end{tablenotes}
\end{threeparttable}
\end{table}

\begin{table}[ht!]
\centering
\begin{threeparttable}
\caption{Simulation results ($N = 50$, placebo rate = 25\%) }
\label{tab:sim_n50}
\small
    \begin{tabular}{ll
      S[table-format=1.3]
      S[table-format=1.3]
      S[table-format=1.3]
      S[table-format=1.3]
      S[table-format=1.3]}
    \toprule
    & & \multicolumn{5}{c}{Method} \\
    \cmidrule(lr){3-7}
    {ATE} & {Metric} & {IF-LOO} & {IF Plug-In} & {RobinCar2} & {Bootstrap} & {Unadjusted} \\
    \midrule
    
    \multirow{3}{*}{0\%}
      & SE       & 0.116 & 0.116 & 0.116 & 0.116 & 0.124 \\
      & Est.\ SD & 0.111 & 0.084 & 0.087 & 0.129 & 0.120 \\
      & Type I Error & {6.89} & {18.35} & {17.18} & {4.88} & {6.45} \\
    \midrule
    
    \multirow{3}{*}{2.5\%}
      & SE       & 0.116 & 0.116 & 0.116 & 0.116 & 0.124 \\
      & Est.\ SD & 0.112 & 0.086 & 0.088 & 0.130 & 0.123 \\
      & Coverage & {93.51} & {83.17} & {84.16} & {95.36} & {93.82} \\
    \midrule
    
    \multirow{3}{*}{5\%}
      & SE       & 0.118 & 0.118 & 0.118 & 0.118 & 0.128 \\
      & Est.\ SD & 0.112 & 0.087 & 0.090 & 0.130 & 0.124 \\
      & Coverage & {93.37} & {83.69} & {84.78} & {95.13} & {93.83} \\
    \midrule
    
    \multirow{3}{*}{10\%}
      & SE       & 0.117 & 0.117 & 0.117 & 0.117 & 0.131 \\
      & Est.\ SD & 0.114 & 0.090 & 0.092 & 0.130 & 0.057 \\
      & Coverage & {93.88} & {85.85} & {86.83} & {95.42} & {93.67} \\
    \midrule
    
    \multirow{3}{*}{15\%}
      & SE       & 0.117 & 0.117 & 0.117 & 0.117 & 0.129 \\
      & Est.\ SD & 0.151 & 0.092 & 0.095 & 0.128 & 0.060 \\
      & Coverage & {93.69} & {86.88} & {87.78} & {95.15} & {93.77} \\
    
    \bottomrule
    \end{tabular}
    \begin{tablenotes}
    \small
    \item \textit{Note:} SE is the empirical standard deviation of the ATE point estimates over replicates.
    Type I error is the proportion of replicates ($\times 100$) for which the null hypothesis of no treatment effect is rejected at $\alpha = 0.05$.
    Coverage is the proportion of replicates ($\times 100$) for which the nominal 95\% CI contained the true ATE.
    A small number of replicates (between 0.05\% and 0.6\%) were excluded for all estimators from all DGPs due to errors thrown by the RobinCar2 package; replicates were excluded jointly across estimators to avoid conferring an advantage to any estimator by discarding potentially difficult replicates.
    \end{tablenotes}
\end{threeparttable}
\end{table}

We first consider moderate sample size scenario of $N=250$ with a low placebo rate of $2.5\%$ (\autoref{fig:n250}). IF-LOO  achieves coverage near the nominal level across all DGPs, as does the unadjusted analysis. Bootstrap exhibits slight overcoverage at low true ATE values, while IF Plug-In and RobinCar2 continue to undercover at low true ATE values, though the degree  of undercoverage is substantially attenuated relative to the $N=50$ scenario  and both converge toward the nominal level as the true ATE increases. This  attenuation is expected: as the sample size grows, the overfitting bias in the plug-in variance estimator shrinks, and the advantage of the LOO correction  diminishes. Nevertheless, the persistence of undercoverage for IF Plug-In and  RobinCar2 even at $N=250$ suggests that the correction offered by IF-LOO remains practically relevant at moderate sample sizes.

We next consider the more challenging scenario: a small sample size of $N=50$  with a placebo rate of $25\%$ (\autoref{fig:n50}). In this setting, IF Plug-In  and RobinCar2 exhibit substantial undercoverage across all DGPs, with type-I error coverage as high as $18\%$ at the null. Coverage of the confidence intervals resulting from these two estimators improve only modestly as the true  ATE increases, never approaching the nominal level within the range of DGPs  considered. This is consistent with the theoretical motivation for the IF-LOO estimator: in small samples with a correctly specified working model, the plug-in variance estimator is downward biased due to overfitting of the outcome model, and this bias is severe enough to meaningfully distort inference. In contrast, the bootstrap estimator achieves coverage near the nominal level across all DGPs. The IF-LOO and unadjusted analysis both exhibit modest undercoverage, with coverage remaining stable between 93\% and 94\% across the range of true ATE values.

Taken together, the simulation results suggest that IF-LOO offers a meaningful improvement over the IF plug-in and RobinCar2 variance estimators in finite samples, particularly  in the small sample, higher placebo rate regime where overfitting is most  pronounced. Bootstrap performs comparably to IF-LOO in terms of coverage. However, as we discuss further in \autoref{discussion}, the nonparametric bootstrap is not deterministic (i.e., it has Monte Carlo error). Running the nonparametric bootstrap with 1,000 bootstrap samples in the same data set multiple times can result in different estimates for the standard error. Increasing the number of bootstrap samples can decrease this source of error - however, it comes at increased computational time. Furthermore, working logistic regression models can fail to fit for some bootstrap samples where complete separation (all 0s or 1s in a particular covariate strata) occurs. This leads to having to throw out some bootstrap samples, leading to an invalid bootstrap distribution. Because the IF-LOO variance has a closed-form expression, it avoids the computational instability and sampling errors associated with the bootstrap.

\begin{figure}[H]
    \centering
    \includegraphics[width=0.8\textwidth]{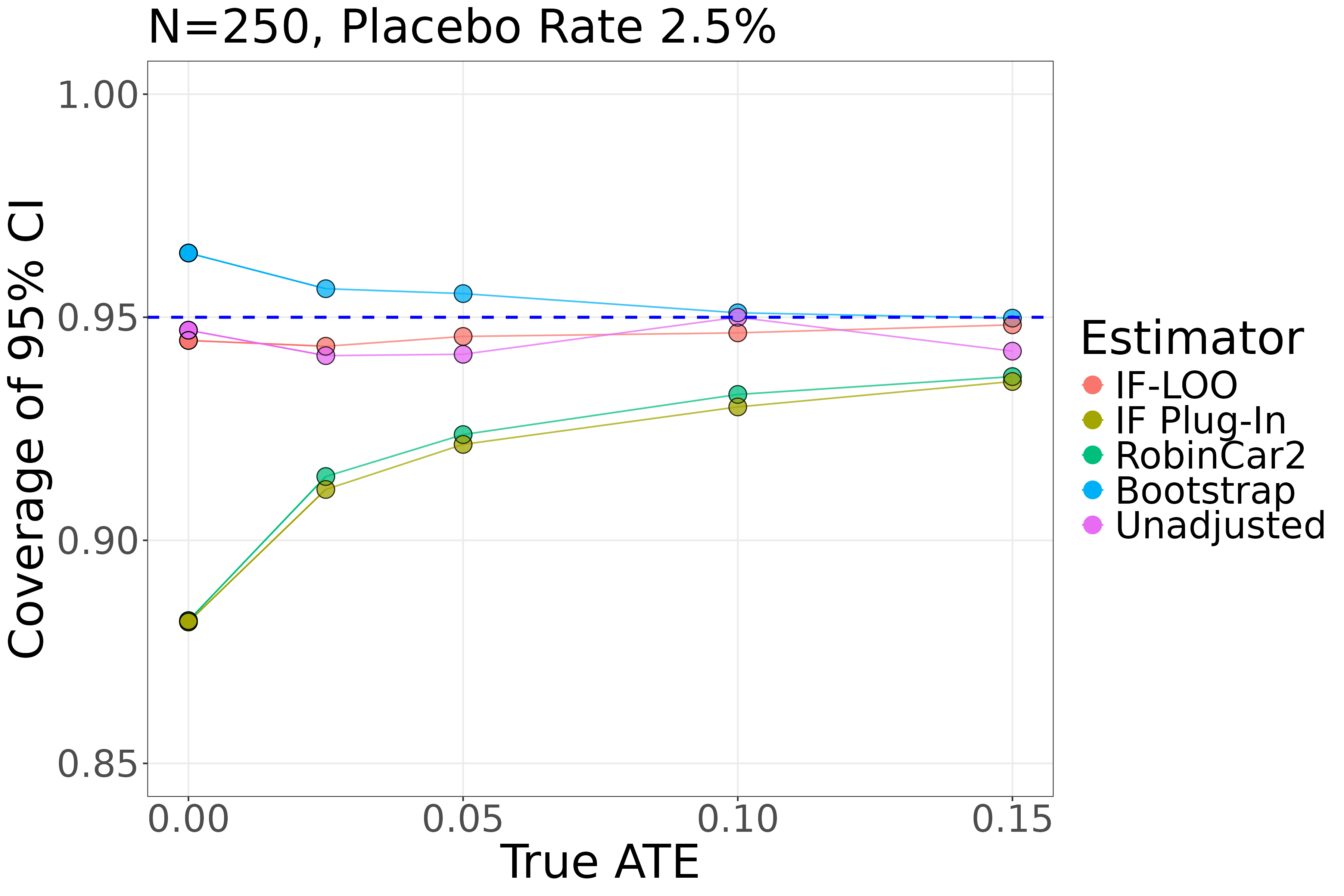}
    \caption{Simulation results for $N = 250$ with a placebo rate of $2.5\%$. The $x$-axis displays the true ATE and the $y$-axis displays the empirical coverage of the nominal $95\%$ confidence interval across 10{,}000 replicates, with the dashed blue line indicating the nominal $0.95$ level. A small number of replicates (0.05\% for ATE = 2.5\% and 0.51\% for ATE = 0\%) were excluded for all estimators due to numerical instability in matrix inversion during RobinCar2 variance estimation; replicates were excluded jointly across estimators to avoid conferring an advantage on RobinCar2 by discarding potentially difficult replicates. IF-LOO and the unadjusted analysis achieve coverage near the nominal level across all DGPs, while IF Plug-In and RobinCar2 exhibit undercoverage at low true ATE values, converging toward nominal coverage as the true ATE increases (but still undercovering). Bootstrap exhibits slight overcoverage at low true ATE values but returns to nominal levels at moderate and larger ATEs.}
    \label{fig:n250}
\end{figure}
\begin{figure}[H]
    \centering
    \includegraphics[width=0.8\textwidth]{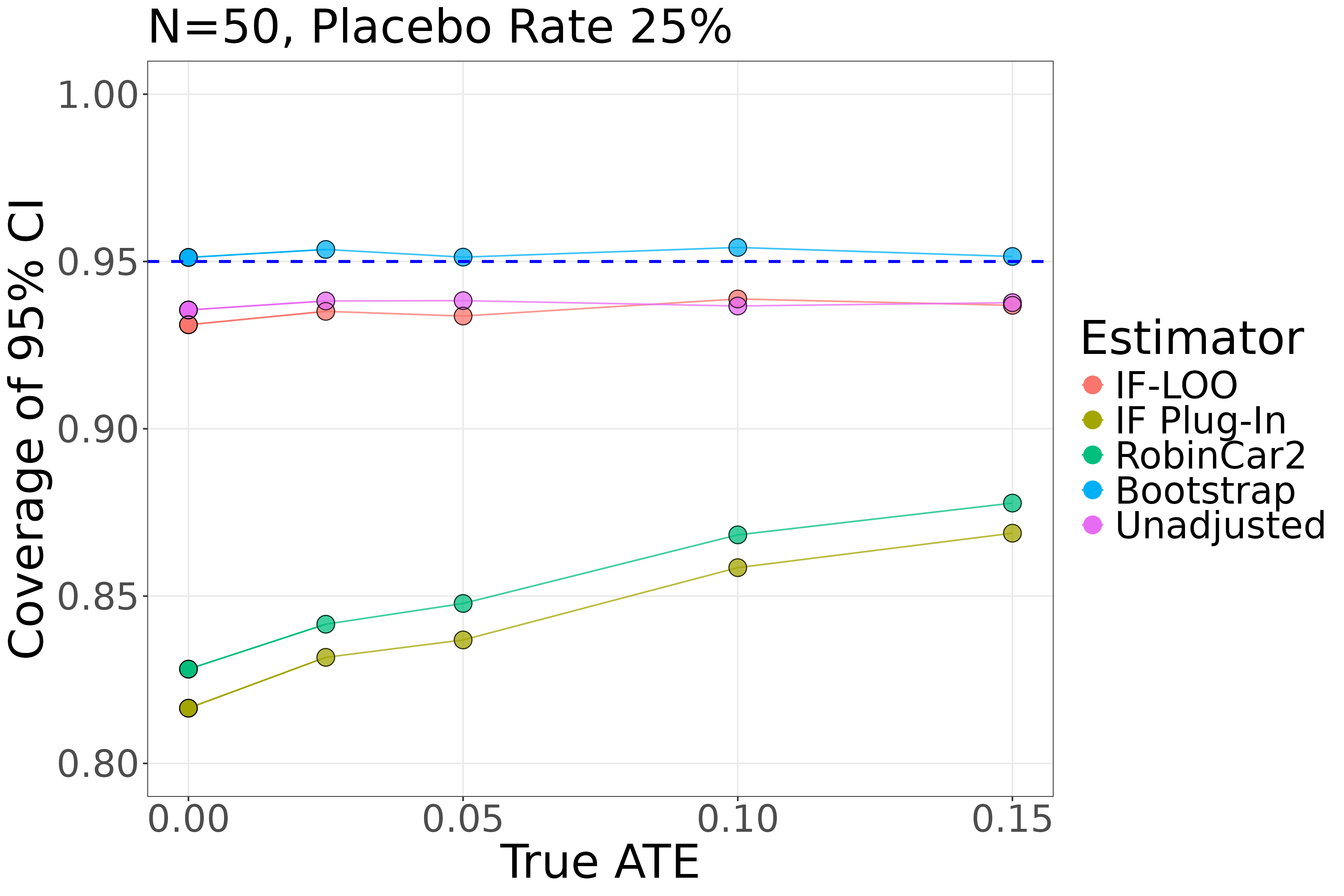}
    \caption{Simulation results for $N = 50$ with a placebo rate of $25\%$. The $x$-axis displays the true ATE and the $y$-axis displays the empirical coverage of the nominal $95\%$ confidence interval across 10{,}000 replicates, with the dashed blue line indicating the nominal $0.95$ level. A small number of replicates (between 0.05\% and 0.6\%) were excluded for all estimators due to numerical instability in matrix inversion during RobinCar2 variance estimation; replicates were excluded jointly across estimators to avoid conferring an advantage on RobinCar2 by discarding potentially difficult replicates. Bootstrap achieves coverage near the nominal level across all DGPs. IF-LOO and the unadjusted analysis exhibit modest undercoverage, remaining stable across true ATE values. IF Plug-In and RobinCar2 exhibit substantial undercoverage across all DGPs, with coverage increasing toward but not reaching the nominal level as the true ATE increases.}
    \label{fig:n50}
\end{figure}

\section{Discussion}
\label{discussion}

Traditionally, covariate adjustment for binary outcomes involved reporting the coefficient of treatment from a logistic regression model that incorporates treatment and covariates as an estimate for a treatment effect. However, this approach complicates the interpretation of the treatment effect due to the non-collapsibility of the odds ratio \cite{prentice1978regression,greenland1999confounding}. Non-collapsibility leads to the coefficients in the unadjusted and adjusted logistic regression models to actually target different estimands. This means that the statistician's choice of model specification will change the targeted treatment effect, which is not a desirable feature. Therefore, the simplest reason to prefer the marginal treatment effect to other conditional treatment effects is that, for a given dataset, there is only one marginal treatment effect, while there are many conditional effects, since there is an infinite choice of potential adjustment models. Marginal effects are also aligned with the treatment effect that is targeted when comparing outcomes between the arms of a study. The move towards targeting marginal effects on the risk difference scale in clinical trials is therefore welcomed and standardization provides a covariate-adjusted avenue for efficiently targeting these effects.

In this paper, we demonstrate the utility of a LOO influence function based variance estimator for the risk difference of a binary endpoint estimated via standardization. We demonstrate that the approach is well grounded in the asymptotic theory of influence functions and prove consistency of the proposed IF-LOO variance estimator. Via simulations, we demonstrate that in small samples and those with low event rates the proposed variance estimator provides more reliable coverage and proper type-I error control as compared to standard methods proposed in the literature. Standard methods have a tendency to overfit the data in these situations (especially when there are many covariates relative to the number of observations). In small samples, even a few covariates can lead to overfitting and improper model specification. Based on the semi-parametric theory of the efficient influence function, the correctly specified model is the one that achieves the minimum proper variance (i.e., the semiparametric efficiency bound). The IF-LOO estimator seems to perform best especially in these scenarios. In large samples with non-rare event rates, the differences between the various variance estimators disappears and the issue is less of a concern.

The IF-LOO approach has a number of advantages. First, in clinical trial programming implementations where double-coding and precise matching of results is common practice, the LOO approach enables unambiguous exact matching because it does not involve any repeated sampling. Also, machine learning algorithms (e.g., ridge, lasso, etc.) could be easily employed for the plug-in predictions since the variance is influence function based, although as discussed above, more sophisticated functional forms run the risk of more overfitting, which is already a problem in these scenarios. Cross-validation or sample splitting is often necessary when employing machine learning algorithms, which while theoretically well-principled, has been shown to lead to convergence issues in finite-samples \cite{shao2026benchmarking}.  Operationally, clinical trial analyses must often be double-coded, and a LOO based estimator has the luxury of being deterministic given the data - i.e., there is no random sampling involved, as with cross-validation). In general, this work helps us observe that with dichotomized data, unforeseen problems in covariate adjustment are more likely to arise. We also observed that with binary covariates, the under-coverage of overfit models tends to be more of an issue. Hence, we considered a mixture of binary and uniform distributed variables in our simulation study.

A number of avenues for future research are possible. Firstly, finite sample corrections \cite{tsiatis2008covariate} on the asymptotic-based variance estimates could help fix the under coverage of these methods revealed in our simulation studies. Our approach is grounded in super-population inference, as this is the standard inferential framework for clinical trials; however, the relationship of these methods to randomization-based inference \cite{abadie2020sampling} should be explored more deeply. \textcite{magirr2025estimating} have noted that the inferential framework and desired estimand can inform the choice of variance estimator in the context of standardization. Relatedly, exact tests may also have the potential to integrate covariate adjustment via linearization of the test statistic, as done in \cite{sun2025improve}. We welcome motivated researchers to pursue these paths.

Overall, the IF-LOO variance estimator solves a number of issues currently faced in covariate adjustment with binary outcomes - especially when faced with low event rates or small sample sizes. Through both simulation and theory we demonstrate its advantageous properties and feel it a prudent default choice as a variance estimate to ensure valid inferences. 

\pagebreak

\printbibliography

@techreport{FDA2023covariates,
  title       = {Adjusting for Covariates in Randomized Clinical Trials for Drugs and Biological Products},
  author      = {{U.S. Food and Drug Administration}},
  year        = {2023},
  month       = may,
  institution = {U.S. Department of Health and Human Services},
  type        = {Guidance for Industry},
  url         = {https://www.fda.gov/media/148910/download},
  note        = {Accessed: 2023-XX-XX}
}

@article{ge2011covariate,
  title={Covariate-adjusted difference in proportions from clinical trials using logistic regression and weighted risk differences},
  author={Ge, Miaomiao and Durham, L Kathryn and Meyer, R Daniel and Xie, Wangang and Thomas, Neal},
  journal={Drug information journal: DIJ/Drug Information Association},
  volume={45},
  number={4},
  pages={481--493},
  year={2011},
  publisher={Springer}
}

@article{ye2023robust,
  title={Robust variance estimation for covariate-adjusted unconditional treatment effect in randomized clinical trials with binary outcomes},
  author={Ye, Ting and Bannick, Marlena and Yi, Yanyao and Shao, Jun},
  journal={Statistical theory and related fields},
  volume={7},
  number={2},
  pages={159--163},
  year={2023},
  publisher={Taylor \& Francis}
}

@article{rosenblum2010simple,
  title={Simple, efficient estimators of treatment effects in randomized trials using generalized linear models to leverage baseline variables},
  author={Rosenblum, Michael and {van Der Laan}, Mark J},
  journal={The international journal of biostatistics},
  volume={6},
  number={1},
  pages={13},
  year={2010}
}

@article{magirr2025estimating,
  title={Estimating the Variance of Covariate-Adjusted Estimators of Average Treatment Effects in Clinical Trials With Binary Endpoints},
  author={Magirr, Dominic and Wang, Craig and Przybylski, Alexander and Baillie, Mark},
  journal={Pharmaceutical Statistics},
  volume={24},
  number={4},
  pages={e70021},
  year={2025},
  publisher={Wiley Online Library}
}

@article{liu2024covariate,
  title={Covariate adjustment and estimation of difference in proportions in randomized clinical trials},
  author={Liu, Jialuo and Xi, Dong},
  journal={Pharmaceutical Statistics},
  volume={23},
  number={6},
  pages={884--905},
  year={2024},
  publisher={Wiley Online Library}
}

@article{robins1986new,
  title={A new approach to causal inference in mortality studies with a sustained exposure period—application to control of the healthy worker survivor effect},
  author={Robins, James M},
  journal={Mathematical Modelling},
  volume={7},
  number={9-12},
  pages={1393--1512},
  year={1986},
  publisher={Elsevier}
}

@article{fisher1934statistical,
  title={Statistical methods for research workers.},
  author={Fisher, Ronald Aylmer},
  year={1934}
}

@article{prentice1978regression,
  title={Regression analysis of grouped survival data with application to breast cancer data},
  author={Prentice, Ross L and Gloeckler, Lynn A},
  journal={Biometrics},
  pages={57--67},
  year={1978},
  publisher={JSTOR}
}

@article{greenland1999confounding,
  title={Confounding and collapsibility in causal inference},
  author={Greenland, Sander and Pearl, Judea and Robins, James M},
  journal={Statistical science},
  volume={14},
  number={1},
  pages={29--46},
  year={1999},
  publisher={Institute of Mathematical Statistics}
}

@article{neison1844method,
  title={On a method recently proposed for conducting inquiries into the comparative sanitary condition of various districts},
  author={Neison, F. G. P.},
  journal={Journal of the Statistical Society of London},
  volume={7},
  number={1},
  pages={40--68},
  year={1844},
  publisher={JSTOR}
}

@article{abadie2020sampling,
  title={Sampling-based versus design-based uncertainty in regression analysis},
  author={Abadie, Alberto and Athey, Susan and Imbens, Guido W and Wooldridge, Jeffrey M},
  journal={Econometrica},
  volume={88},
  number={1},
  pages={265--296},
  year={2020},
  publisher={Wiley Online Library}
}

@article{ocampo2026revealing,
  title={Revealing the Truth: Calculating True Values in Causal Inference Simulation Studies via Gaussian Quadrature},
  author={Ocampo, Alex and Giudice, Enrico and McCaw, Zachary R and Morris, Tim P},
  journal={arXiv preprint arXiv:2601.05128},
  year={2026}
}

@book{gauss1814methodus,
  title={Methodus nova integralium valores per approximationem inveniendi},
  author={Gauss, Carl Friedrich},
  year={1814}
}

@article{tsiatis2008covariate,
  title={Covariate adjustment for two-sample treatment comparisons in randomized clinical trials: a principled yet flexible approach},
  author={Tsiatis, Anastasios A and Davidian, Marie and Zhang, Min and Lu, Xiaomin},
  journal={Statistics in medicine},
  volume={27},
  number={23},
  pages={4658--4677},
  year={2008},
  publisher={Wiley Online Library}
}

@article{wang2023model,
  title={Model-robust inference for clinical trials that improve precision by stratified randomization and covariate adjustment},
  author={Wang, Bingkai and Susukida, Ryoko and Mojtabai, Ramin and Amin-Esmaeili, Masoumeh and Rosenblum, Michael},
  journal={Journal of the American Statistical Association},
  volume={118},
  number={542},
  pages={1152--1163},
  year={2023},
  publisher={Taylor \& Francis}
}

@article{suissa1985exact,
  title={Exact unconditional sample sizes for the 2 times 2 binomial trial},
  author={Suissa, Samy and Shuster, Jonathan J},
  journal={Journal of the Royal Statistical Society: Series A (General)},
  volume={148},
  number={4},
  pages={317--327},
  year={1985},
  publisher={Wiley Online Library}
}

@article{sun2025improve,
  title={Improve the Precision of Area Under the Curve Estimation for Recurrent Events Through Covariate Adjustment},
  author={Sun, Jiren and Wang, Tuo and Yi, Yanyao and Ye, Ting and Shao, Jun and Du, Yu},
  journal={Statistics in Medicine},
  volume={44},
  number={15-17},
  pages={e70187},
  year={2025},
  publisher={Wiley Online Library}
}

@article{shao2026benchmarking,
  title={Benchmarking covariate-adjustment strategies for randomized clinical trials},
  author={Shao, Yulin and Lyu, Liangbo and Yu, Menggang and Wang, Bingkai},
  journal={arXiv preprint arXiv:2602.00434},
  year={2026}
}

@article{colantuoni2015leveraging,
  title={Leveraging prognostic baseline variables to gain precision in randomized trials},
  author={Colantuoni, Elizabeth and Rosenblum, Michael},
  journal={Statistics in medicine},
  volume={34},
  number={18},
  pages={2602--2617},
  year={2015},
  publisher={Wiley Online Library}
}

@article{neyman1923application,
  title={On the application of probability theory to agricultural experiments. Essay on principles. Section 9.},
  author={Neyman, Jerzy},
  journal={Statistical Science},
  pages={465--480},
  year={1923},
  publisher={JSTOR}
}

@book{serfling1980,
    author = {Serfling, RJ},
    title = {Approximation Theorems of Mathematical Statistics},
    year = {1980},
    publisher = {J. Wiley \& Sons},
    doi = {10.1002/9780470316481}
}

@book{boos2013,
    author = {Boos, DD and Stefanski, LA},
    title = {Essential Statistical Inference: Theory and Methods},
    year = {2013},
    publisher = {Springer},
    doi = {10.1007/978-1-4614-4818-1}
}

@book{tsiatis2006,
    author = {Tsiatis, AA},
    title = {Semiparametric Theory and Missing Data},
    publisher = {Springer},
    year = {2006},
    edition = {1},
    doi = {10.1007/0-387-37345-4}
}

@article{rubin1974estimating,
  title={Estimating causal effects of treatments in randomized and nonrandomized studies.},
  author={Rubin, Donald B},
  journal={Journal of educational Psychology},
  volume={66},
  number={5},
  pages={688},
  year={1974},
  publisher={American Psychological Association}
}

@Manual{robincar,
  title = {RobinCar2: ROBust INference for Covariate Adjustment in Randomized Clinical Trials},
  author = {Liming Li and Marlena Bannick and Daniel {Sabanes Bove} and Dong Xi and Ting Ye and Yanyao Yi},
  year = {2026},
  note = {R package version 0.2.2},
  url = {https://github.com/openpharma/RobinCar2/},
}

@book{vanderVaart1998,
  title={Asymptotic Statistics},
  author={van der Vaart, A. W. },
  year={1998},
  publisher={Cambridge University Press }
}

@article{efron_bootstrap_1979,
	title = {Bootstrap {Methods}: {Another} {Look} at the {Jackknife}},
	volume = {7},
	issn = {0090-5364, 2168-8966},
	shorttitle = {Bootstrap {Methods}},
	url = {https://projecteuclid.org/journals/annals-of-statistics/volume-7/issue-1/Bootstrap-Methods-Another-Look-at-the-Jackknife/10.1214/aos/1176344552.full},
	doi = {10.1214/aos/1176344552},
	language = {en},
	number = {1},
	urldate = {2026-04-09},
	journal = {The Annals of Statistics},
	publisher = {Institute of Mathematical Statistics},
	author = {Efron, B.},
	month = jan,
	year = {1979},
	pages = {1--26},
}

@article{Balzer_vanderLaan_Petersen_2026, title={Machine learning to optimize precision in the analysis of randomized trials: A journey in pre-specified, yet data-adaptive learning}, DOI={10.1177/17407745261417227}, journal={Clinical Trials}, author={Balzer, Laura B and {van der Laan}, Mark J and Petersen, Maya L}, year={2026}, month={Feb}}

@article{vanderLaanPolleyHubbard,
url = {https://doi.org/10.2202/1544-6115.1309},
title = {Super Learner},
title = {},
author = {Mark J. {van der Laan} and Eric C Polley and Alan E. Hubbard},
volume = {6},
number = {1},
journal = {Statistical Applications in Genetics and Molecular Biology},
doi = {doi:10.2202/1544-6115.1309},
year = {2007},
lastchecked = {2026-05-05}
}

@article{newey2018cross,
  title={Cross-fitting and fast remainder rates for semiparametric estimation},
  author={Newey, Whitney K and Robins, James R},
  journal={arXiv preprint arXiv:1801.09138},
  year={2018}
}

@Inbook{Zheng2011,
author="Zheng, Wenjing
and {van der Laan}, Mark J.",
title="Cross-Validated Targeted Minimum-Loss-Based Estimation",
bookTitle="Targeted Learning: Causal Inference for Observational and Experimental Data",
year="2011",
publisher="Springer New York",
address="New York, NY",
pages="459--474",
isbn="978-1-4419-9782-1",
doi="10.1007/978-1-4419-9782-1_27",
url="https://doi.org/10.1007/978-1-4419-9782-1_27"
}

@article{chernozhukov_doubledebiased_2018,
	title = {Double/debiased machine learning for treatment and structural parameters},
	volume = {21},
	issn = {1368-4221},
	url = {https://doi.org/10.1111/ectj.12097},
	doi = {10.1111/ectj.12097},
	number = {1},
	journal = {The Econometrics Journal},
	author = {Chernozhukov, Victor and Chetverikov, Denis and Demirer, Mert and Duflo, Esther and Hansen, Christian and Newey, Whitney and Robins, James},
	month = feb,
	year = {2018},
	note = {\_eprint: https://academic.oup.com/ectj/article-pdf/21/1/C1/27684918/ectj00c1.pdf},
	pages = {C1--C68},
}

\pagebreak

\appendix
\section{Proof of Theorem 1}
\label{app:proof}
In this appendix, we will prove \autoref{theorem}. We first lay out a few necessary assumptions:

\subsection{Conditions}
\label{app:conditions}
\begin{enumerate}
   \item There exists a maximizer $\boldsymbol{\beta}^*$ of the expected log-likelihood of the logistic regression, where the expectation is with respect to $\mathbb{P}_0$.
    \item Bounded propensity scores: $\eta < \pi_0 < 1-\eta$ for $0 < \eta < 1$.
    \item Stability: $||\widehat{\mu}_{-i} - \hat{\mu}||_{L_2} = o_p(n^{-1/2})$
    \item Bounded fourth moments: $\mathbb{E}_{\mathbb{P}_0}[\varphi(O_i)^4] < \infty$
\end{enumerate}    

The first condition is discussed in \cite{rosenblum2010simple}. The second condition is automatically met in randomized control trials where the propensity score is bounded away from 0 and 1. The third condition is automatically met when using a working logistic regression model to fit $\widehat{\mu}$ when choosing a set number of covariates to adjust for that does not grow with sample size $n$. The fourth condition is satisfied for most reasonable probability distributions.

As in \autoref{var}, let $\mu^*(a,x) = \text{expit}(m_{\boldsymbol{\beta}^*}(a,x))$ be logistic regression model with the true coefficients $\boldsymbol{\beta}^*$ (as shown in \cite{rosenblum2010simple}, this is guaranteed to exist under conditions 1 and 2). As discussed in section \autoref{var}, the IF of the standardized estimator is given by

\begin{align}
    \varphi(O) = \left(\frac{A}{\pi_0} - \frac{1-A}{1-\pi_0} \right)(Y-\mu^*(A,X)) +\mu^*(1,X) - \mu^*(0,X) - \theta_{ATE}.
\end{align}

Then, we can write the true asymptotic variance of the standardized estimator for the ATE $\mathbb{V}_0 = \mathbb{V}(\widehat{\theta}_{\text{standardized}})$ as
$$\mathbb{V}_0 = \mathbb{E}_{\mathbb{P}_0}[\varphi(O)^2].$$
\noindent  Rosenblum and van der Laan prove that the standardized estimator is asymptotically normal with variance $\mathbb{V}_0/n$ \cite{rosenblum2010simple}:

$$\sqrt{n} \left(\widehat{\theta}_{\text{standardized}} -  \theta_{\text{ATE}}\right) \overset{\mathcal{D}}{\longrightarrow} \mathcal{N}(0, \mathbb{V}_{0}).$$

\noindent In \autoref{eq:var_loo_cv}, we defined
$$\widehat{\sigma}^2_{\text{IF-LOO}}
=\frac{1}{n^2}
\sum_{i=1}^n \widehat{\varphi}_{-i}^2.$$ For the proof below, we will be working with the IF-LOO estimator of the asymptotic variance

\begin{align*}
    \widehat{\mathbb{V}}_{\text{IF-LOO}} &= n\widehat{\sigma}^2_{\text{IF-LOO}}\\
&=\frac{1}{n}
\sum_{i=1}^n \widehat{\varphi}_{-i}^2
\end{align*}

\noindent and show that this estimator converges to $\mathbb{V}_{0}$.

\textit{Proof.} Let $\delta_i = \widehat{\varphi}_{-i}(O_i) - \varphi(O_i)$. We can write

\begin{align*}

\begin{autobreak}
    \delta_i =
    \left(1-\frac{A_i}{\pi_0}\right)(\mu^*(1, X_i)-\widehat{\mu}_{-i}(1, X_i)) -\left(1-\frac{(1-A_i)}{1-\pi_0}\right)(\mu^*(0, X_i)-\widehat{\mu}_{-i}(0, X_i)) 
    - \widehat{\theta}_{\text{standardized}} + \theta_{\text{ATE}}
\end{autobreak}\\
\begin{autobreak}
    =
    \left(1-\frac{A_i}{\pi_0}\right)(\mu^*(1, X_i)-\widehat{\mu}_{-i}(1, X_i)) -\left(1-\frac{(1-A_i)}{1-\pi_0}\right)(\mu^*(0, X_i)-\widehat{\mu}_{-i}(0, X_i)) 
    + O_p(n^{-1/2})
\end{autobreak}
\end{align*}

\noindent where $\theta_{\text{ATE}}- \widehat{\theta}_{\text{standardized}} = O_p(n^{-1/2})$ by \cite{rosenblum2010simple}.

We can rewrite $\widehat{\varphi}_{-i}(O_i)^2 = (\varphi(O_i)-\delta_i)^2$ and thus
\begin{align*}
    \sqrt{n} \left(\widehat{\mathbb{V}}_{\text{IF-LOO}} -  \mathbb{V}_{0}\right) &= \sqrt{n} \left(\frac{1}{n} \sum_{i=1}^n (\varphi(O_i)-\delta_i)^2 - \mathbb{V}_{0} \right)\\
    &\begin{autobreak}
        = \sqrt{n} \left(\frac{1}{n} \sum_{i=1}^n \Big(\varphi(O_i)^2 - 2\varphi(O_i)\delta_i + \delta_i^2 \Big)- \mathbb{V}_{0} \right)
    \end{autobreak}\\
    &\begin{autobreak}
        = \underbrace{\sqrt{n} \left(\sum_{i=1}^n \frac{\varphi(O_i)^2}{n} - \mathbb{V}_{0}\right)}_{\text{A}} - \underbrace{\frac{2}{\sqrt{n}} \sum_{i=1}^n \delta_i \varphi(O_i)}_{\text{B}} + \underbrace{\frac{1}{\sqrt{n}}\sum_{i=1}^n \delta_i^2}_{\text{C}}.
    \end{autobreak}
\end{align*}

\noindent We'll consider the three terms separately in more detail below in each subsection. In brief, term A converges to a Normal distribution by CLT, terms B and C converge in probability to 0 as the LOO Influence function converges to the true IF in large samples.

\subsubsection{Term A}
\noindent By the central limit theorem, we have 
$$\sqrt{n} \left(\sum_{i=1}^n \frac{\varphi(O_i)^2}{n} - \mathbb{V}_{0}\right)\xrightarrow[]{d} \mathcal{N}(0, W),$$
where $W = \text{Var}(\varphi(O)^2) = \mathbb{E}_0(\varphi(O)^4 ) - \mathbb{V}_{0}^2$, where the fourth moment exists when condition 4 is satisfied.

\subsubsection{Term C}
\label{app:termC}
\noindent Let $r_{-i}(a, x) = \mu^*(a, x)-\hat{\mu}_{-i}(a, x)$ be the residual of the model $\hat{\mu}_{-i}(a, x)$. Because $\left|1-\frac{A_i}{\pi_0}\right| \leq 1+\frac{1}{\eta}$ and $\left|1-  \frac{1-A_i}{1-\pi_0}\right| \leq 1 + \frac{1}{\eta}$ (as guaranteed by condition 2), we can apply the triangle inequality and the fact that, for real numbers $a, b, c$, $(a+b+c)^2 \leq 3(a^2+b^2+c^2)$:

\begin{align}
    |\delta_i| &\leq \left(1+\frac{1}{\eta} \right) \left(|r_{-i}(1, X_i)| + |r_{-i}(0, X_i)| \right) + O_p(n^{-1/2})\\
    \delta_i^2 &\leq C_1 \big((r_{-i}(1, X_i)^2 + r_{-i}(0, X_i)^2 \big)+ O_p(n^{-1}), \label{eq:pointwise}
\end{align}

\noindent where $C_1 = 3\left(1+\frac{1}{\eta} \right)^2$.

Now, consider the conditional expectation given $\{O_j:\forall j s.t. j \neq i\}$ of both sides:

\begin{align*}
    \mathbb{E}_0[\delta_i^2|\{O_j: j \neq i\}] &\leq C_1 \big( \mathbb{E}_0[r_{-i}(1, X_i)^2| \{O_j: j \neq i\}] + \mathbb{E}_0[r_{-i}(0, X_i)^2| \{O_j: j \neq i\}]\big) + O_p(n^{-1})\\
    &\leq C_1 \big( \mathbb{E}_0[r_{-i}(1, X_i)^2] + \mathbb{E}_0[r_{-i}(0, X_i)^2]\big) + O_p(n^{-1}),\\
\end{align*}
\noindent where we can drop the conditioning in the second line because, by the leave-one-out construction, $\hat{\mu}_{-i}$ is fixed given the set $\{O_j: j \neq i\}$, and $O_i$ is independent of $\{O_j: j \neq i\}$.

Note that, because of Condition 2, we can write 

\begin{align*}
    \mathbb{E}_0[r_{-i}(1, X_i)^2]\pi_0 + \mathbb{E}_0[r_{-i}(0, X_i)^2](1-\pi_0 ) &= ||\hat{\mu}_{-i}-\mu^*||_{L_2}^2 \\
\mathbb{E}_0[r_{-i}(1, X_i)^2] + \mathbb{E}_0[r_{-i}(0, X_i)^2] &\leq \frac{1}{\eta} ||\hat{\mu}_{-i}-\mu^*||_{L_2}^2,
\end{align*}
\noindent and so, 
\begin{align*}
    \mathbb{E}_0[\delta_i^2|\{O_j: j \neq i\}] &\leq C_2 ||\hat{\mu}_{-i}-\mu^*||_{L_2}^2,\\
\end{align*}
\noindent where $C_2 = C_1/\eta$.

By triangle inequality, we have
\begin{align*}
    ||\hat{\mu}_{-i} - \mu^*||_{L_2} 
    &\leq \underbrace{||\hat{\mu}_{-i} - \hat{\mu}||_{L_2}}_{I}  + \underbrace{||\hat{\mu} - \mu^*||_{L_2} }_{II},\\
\end{align*}
\noindent where $\hat{\mu}$ is the logistic regression fit on the whole data set $O_1, \dots, O_n$. By condition 3, term I = $o_p(n^{-1/2})$. And by condition 1, if the regressions are fit using maximum likelihood estimation, then term II = $O_p(n^{-1/2})$ \cite{vanderVaart1998}. Thus,

$$||\hat{\mu}_{-i} - \mu^*||^2_{L_2} 
    = O_p(n^{-1}).$$

\noindent Since this holds uniformly over all $i$, we can take the average:

\begin{align*}
    \frac{1}{n} \sum_{i=1}^n \mathbb{E}_0[\delta_i^2|\{O_j: j\neq i \}] &\leq C_2 O_p(n^{-1}) + O_p(n^{-1})\\
&= o_p(n^{-1/2})
\end{align*}

Now, we consider the term $\left| \frac{1}{n}\sum_{i=1}^n \delta_i^2 - \frac{1}{n} \sum_{i=1}^n \mathbb{E}_0[\delta_i^2|\{O_j: j \neq i\}]\right|$, the magnitude of the difference between the sum and the conditional mean. First, note that $\delta_i$ and $\delta_j$ are independent for $i \neq j$. Thus, we can write

\begin{align*}
    \text{Var}\left(\frac{1}{n} \sum_{i=1}^n \delta_i^2\big|\{O_j: j\neq i \} \right) &= \frac{1}{n^2} \sum_{i=1}^n\text{Var}\left( \delta_i^2 \big|\{O_j: j\neq i \} \right)\\
    &\leq \frac{1}{n^2} \sum_{i=1}^n \mathbb{E}_0\left[\delta_i^4\big|\{O_j: j\neq i \}\right]
\end{align*}

By \autoref{eq:pointwise}, 
\begin{align*}
    \delta_i^4 &\leq C_1^2 \big((r_{-i}(1, X_i)^2 + r_{-i}(0, X_i)^2 \big)^4\\
    &\leq 8C_1^2\big((r_{-i}(1, X_i)^4 + r_{-i}(0, X_i)^4 \big)
\end{align*}

The same argument as above gives

\begin{align}
    \mathbb{E}_0[\delta_i^4|\{O_j: j \neq i\}] &\leq 8C_2 ||\hat{\mu}_{-i}-\mu^*||_{L_2}^4 = O_p(n^{-2}). \label{app:delta_4}
\end{align}

\noindent Thus,
\begin{align*}
    \text{Var}\left(\frac{1}{n} \sum_{i=1}^n \mathbb{E}_0[\delta_i^2|\{O_j: j\neq i \}] \right) &\leq \frac{1}{n^2} \sum_{i=1}^n \mathbb{E}_0[\delta_i^4|\{O_j: j\neq i \}]\\
    &= \frac{1}{n^2} \times n\times O_p(n^{-2})\\
    &= O_p(n^{-3}).
\end{align*}

By Chebyshev's inequality, for any $\epsilon > 0$, we have

\begin{align*}
    \mathbb{P}_0\left(\left| \frac{1}{n}\sum_{i=1}^n \delta_i^2 - \frac{1}{n} \sum_{i=1}^n \mathbb{E}_0[\delta_i^2|\{O_j: j \neq i\}]\right| > \epsilon n^{-1/2} \Bigg| \{O_j: j\neq i\}
\right) &\leq \frac{\text{Var}\left(\frac{1}{n^2} \sum_{i=1}^n \mathbb{E}_0[\delta_i^2|\{O_j: j\neq i \}] \right)}{\epsilon^2 n^{-1}}\\
&= \frac{O_p(n^{-2})}{\epsilon^2}\\
&\rightarrow 0.
\end{align*}

Therefore, by applying law of total probability, we have shown
\begin{align*}
    \left| \frac{1}{n}\sum_{i=1}^n \delta_i^2 - \frac{1}{n} \sum_{i=1}^n \mathbb{E}_0[\delta_i^2|\{O_j: j \neq i\}]\right| = o_p(n^{-1/2}).
\end{align*}

\noindent We can then write
\begin{align*}
    \frac{1}{n} \sum_{i=1}^n \delta_i^2 &= \frac{1}{n}\sum_{i=1}^n \mathbb{E}_0[\delta_i^2 | \{O_j: j\neq i\}] +  \frac{1}{n}\sum_{i=1}^n \delta_i^2 - \frac{1}{n} \sum_{i=1}^n \mathbb{E}_0[\delta_i^2|\{O_j: j \neq i\}]\\
    &= o_p(n^{-1/2})
    \end{align*}
\noindent and thus
$$\frac{1}{\sqrt{n}} \sum_{i=1}^n \delta_i^2 = o_p(1).$$

\subsubsection{Term B}

Because of leave one out, $O_i$ and $\hat{\mu}_{-i}$ are independent conditional on $\{O_j: j\neq i\}$. Thus, we have

\begin{align*}
    \mathbb{E}_0[\varphi(O_i)\delta_i|\{O_j: j\neq i\}] = 0.
\end{align*}

\noindent By conditional independence of $\varphi(O_i)\delta_{i}$ over $i = 1, \dots, n$, we can write

\begin{align*}
    \text{Var}\left( \frac{2}{\sqrt{n}} \sum_{i=1}^n \varphi(O_i)\delta_i \Big| \{O_j: j \neq i\} \right) &= \frac{4}{n} \sum_{i=1}^n \mathbb{E}_0 \left[\varphi(O_i)^2\delta_i^2 \big| \{O_j: j \neq i\} \right]\\
    &\leq \frac{4}{n}\mathbb{E}_0\left[\varphi(O_i )^4\big| \{O_j: j \neq i\}\right]^{1/2} \mathbb{E}_0\left[\delta_i^4\big| \{O_j: j \neq i\}\right]^{1/2}\\
    &= \frac{4}{n}\mathbb{E}_0\left[\varphi(O_i )^4\right]^{1/2} \mathbb{E}_0\left[\delta_i^4\big| \{O_j: j \neq i\}\right]^{1/2},
\end{align*}

\noindent where the inequality is by Cauchy-Schwarz, and the equality in line 3 is due to the independence of $O_i$ across $i=1, \dots, n$. By Condition 4, $\mathbb{E}_0\left[\varphi(O_i )^4\right]^{1/2}$ is finite. For the second term, we apply the same pointwise bound as in \autoref{app:delta_4}. Thus,

\begin{align*}
    \text{Var}\left( \frac{2}{n} \sum_{i=1}^n \varphi(O_i)\delta_i \Big| \{O_j: j \neq i\} \right) &=\frac{4}{n} \times n \times O_p(n^{-1}) = O_p(n^{-1}).
\end{align*}

Then, we can apply Chebyshev's inequality. For any $\epsilon$>0,

\begin{align*}
    \mathbb{P}_0\left(\left| \frac{2}{\sqrt{n}} \sum_{i=1}^n \varphi(O_i)\delta_i \right| > \epsilon \Big | \{O_j: j \neq i\} \right) &\leq \frac{O_p(n^{-1})}{\epsilon^2}\\
    &\rightarrow 0.
\end{align*}
\noindent Therefore,

$$\frac{2}{\sqrt{n}} \sum_{i=1}^n \varphi(O_i)\delta_i = o_p(1).$$

\subsubsection{Combining all the terms}
Combining terms and applying Slutsky's theorem,
\begin{align*}
    \sqrt{n} \left(\widehat{\mathbb{V}}_{\text{IF-LOO}} -  \mathbb{V}_{0}\right) &\begin{autobreak}
        = \underbrace{\sqrt{n} \left(\sum_{i=1}^n \frac{\varphi(O_i)^2}{n} - \mathbb{V}_{0}\right)}_{
    \xrightarrow[]{d} \mathcal{N}(0, W)} - \underbrace{\frac{2}{\sqrt{n}} \sum_{i=1}^n \delta_i \varphi(O_i)}_{o_p(1)} + \underbrace{\frac{1}{\sqrt{n}}\sum_{i=1}^n \delta_i^2}_{o_p(1)}
    \end{autobreak}\\
    &\xrightarrow[]{d} \mathcal{N}(0, W).
\end{align*}
\noindent Thus, $\mathbb{V}_{\text{IF-LOO}}$ is asymptotically normal and $\sqrt{n}$ consistent with asymptotic variance $W = \text{Var}(\varphi(O)^2)$.
\section{Data Generating Processes for Simulation Studies}
\label{app:dgps}
This section describes the data-generating process used in \autoref{results:sims}. All of the simulations presented in this paper have data generating processes of the same structure:

\begin{align*}
    X_1, X_2, X_3, X_4 &\overset{\mathrm{iid}}{\sim} Unif(0,1)\\
    X_5, X_6 &\overset{\mathrm{iid}}{\sim} Bernoulli(p=1/2)\\
    A &\sim Bernoulli(p=1/2)\\
    Y|A, \mathbf{X} &\sim Bernoulli(p = \text{expit}(\beta_0 + \beta_A A + \beta_{\mathbf{X}} \mathbf{X}))\\
    \mathbf{X} &= [X_1, X_2, X_3, X_4, X_5, X_6]^T\\
    \beta_{\mathbf{X}}  &= [2.5, 1.8, -2.8, -2.1, 2.0,  -2.0]
\end{align*}

\noindent \autoref{tab:dgp_parameters} describes the other parameters that change throughout the two scenarios presented:

\begin{table}[H]
\centering
\caption{Data Generating Process Parameters}
\label{tab:dgp_parameters}
\begin{tabular}{lllll}
\toprule
Placebo Rate & $N$ & $\beta_0$ & ATE & $\beta_A$ \\
\midrule
\multirow{5}{*}{2.5\%} & \multirow{5}{*}{250} & \multirow{5}{*}{$-1.4828$}
  & 0\%   & $0$ \\
  & & & 2.5\% & $0.8566$ \\
  & & & 5\%   & $1.4065$ \\
  & & & 10\%  & $2.1752$ \\
  & & & 15\%  & $2.7465$ \\
\midrule
\multirow{5}{*}{25\%} & \multirow{5}{*}{50} & \multirow{5}{*}{$-4.9171$}
  & 0\%   & $0$ \\
  & & & 2.5\% & $0.2028$ \\
  & & & 5\%   & $0.3967$ \\
  & & & 10\%  & $0.7643$ \\
  & & & 15\%  & $1.1131$ \\
\bottomrule
\end{tabular}
\end{table}
\end{document}